\renewcommand\footnotetextcopyrightpermission[1]{} 
\newcommand{\BEQA}{\begin{eqnarray}}
\newcommand{\EEQA}{\end{eqnarray}}
\newtheorem{theorem}{{\bf Theorem}}
\theoremstyle{definition}
\newtheorem{definition}{Definition}[section]
\def\@copyrightspace{\relax}
\begin{document}
\title{A Game-Theoretic Approach to Multi-Objective Resource Sharing and Allocation in Mobile Edge Clouds}

\author{Faheem Zafari}
\affiliation{%
  \institution{Imperial College London}
}

\author{Jian Li}
\affiliation{%
  \institution{UMass Amherst}
 }

\author{Kin K. Leung}
\affiliation{%
  \institution{Imperial College London}
}

\author{Don Towsley}
\affiliation{%
  \institution{UMass Amherst}
}
\author{Ananthram Swami}
\affiliation{%
 \institution{U.S. Army Research Laboratory}
}

\begin{abstract}
Mobile edge computing seeks to provide resources to different delay-sensitive applications. However, allocating the limited edge resources to a number of applications is a challenging problem. To alleviate the resource scarcity problem, we propose sharing of resources among multiple edge computing service providers where each service provider has a particular utility to optimize.  We model the resource allocation and sharing problem as a multi-objective optimization problem and present a \emph{Cooperative Game Theory} (CGT) based framework, where each edge service provider first satisfies its native applications and then shares its remaining resources (if available) with users of other providers. Furthermore, we propose an $\mathcal{O}(N)$ algorithm that provides  allocation decisions from the \emph{core}, hence the obtained allocations are \emph{Pareto} optimal and the grand coalition of all the service providers is stable.  Experimental results show that our proposed resource allocation and sharing framework improves the utility of all the service providers compared with the case where the service providers are working alone (no resource sharing).  Our $\mathcal{O}(N)$ algorithm reduces the time complexity of obtaining a solution from the core by as much as 71.67\% when compared with the \emph{Shapley value}.
\end{abstract}

 \copyrightyear{2018} 
\acmYear{2018} 
\setcopyright{usgovmixed}
\acmConference[EdgeTech'18: ]{2018 Technologies for the Wireless Edge Workshop}{November 2, 2018}{New Delhi, India}
\acmBooktitle{2018 Technologies for the Wireless Edge Workshop (EdgeTech'18: ), November 2, 2018, New Delhi, India}
\acmPrice{15.00}
\acmDOI{10.1145/3266276.3266277}
\acmISBN{978-1-4503-5931-3/18/11}

%
%
\begin{CCSXML}
	<ccs2012>
	<concept>
	<concept_id>10003033.10003106.10003113</concept_id>
	<concept_desc>Networks~Mobile networks</concept_desc>
	<concept_significance>500</concept_significance>
	</concept>
	</ccs2012>
\end{CCSXML}


\maketitle
\section{Introduction}\label{sec:intro}
\footnote{This work was supported by the U.S. Army Research Laboratory and the U.K. Ministry of Defence under Agreement Number W911NF-16-3-0001. The views and conclusions contained in this document are those of the authors and should not be interpreted as representing the official policies, either expressed or implied, of the U.S. Army Research Laboratory, the U.S. Government, the U.K. Ministry of Defence or the U.K. Government. The U.S. and U.K. Governments are authorized to reproduce and distribute reprints for Government purposes notwithstanding any copy-right notation hereon.  Faheem Zafari also acknowledges the financial support by EPSRC Centre for Doctoral Training in High Performance Embedded and Distributed Systems (HiPEDS, Grant Reference EP/L016796/1), and Department of Electrical and Electronics Engineering, Imperial College London.}Mobile edge computing is a viable solution to support resource intensive applications (users).  Edge computing relies on edge clouds placed at the edge of any network \cite{he2018s}. This, in contrast with running applications on different mobile devices or deep in the Internet, usually allows   one-hop communication \cite{jia2015optimal} between   edge clouds and application that results in   reducing application latency. However, a fundamental limitation of mobile edge computing is that in contrast with traditional cloud platforms and data centers,  edge clouds are limited in resources and may not always be able to satisfy  application demands \cite{he2018s}. Realizing the resource scarcity problem, the research community has started several initiatives  to create an open edge computing platform  where  edge clouds in the same geographical location can form a shared resource pool. However, allocating these resources efficiently from the shared pool to different applications in itself is a challenge. 
\par There have been several  attempts in the literature  to address the resource allocation problem. He et al. \cite{he2018s} studied the allocation of edge resources to different applications by jointly considering request scheduling and service placement.  
Jia et al. \cite{jia2015optimal} discussed edge cloud placement and allocation of  resources to mobile users in the edge cloud in a Wireless Metropolitan Area Network (WMAN). 
Xu et al. \cite{xu2016efficient} discussed edge cloud placement in a large-scale WMAN that contains multiple wireless access points (APs). 
However, most  work does not  account for the fact that edge resources can belong to different service providers where each service provider can have a particular objective to optimize such as security, throughput, latency, etc. Therefore creating a resource pool and then allocating these resources requires taking different service provider objectives into account, which results in a multi-objective optimization \cite{cho2017survey} (MOO) problem.  Furthermore, each service provider has primary applications that should be prioritized over  applications of other service providers as customer loyalty is an important part of the cloud business model. For example, \emph{Amazon Web Services} (AWS) and \emph{Microsoft Azure} need to satisfy the demands of their own customers first before they ``rent''  resources to other service providers. 



\par 

In this paper, we attempt to address the aforementioned shortcomings. We consider an edge computing setting where different edge clouds belonging to different service providers are placed at the network edge. Each cloud has a specific amount of resources and particular applications  affiliated with the cloud  can ask for resources. All clouds initially attempt to allocate resources to their own affiliated applications. If a cloud can satisfy its own applications and still have available resources, it can share them with other edge clouds that might need resources. 
To capture this, we present a \emph{Cooperative Game Theory} (CGT) based resource sharing and allocation mechanism in which different edge clouds share their resources and form a coalition to satisfy the requests of different applications. Our CGT based framework takes into account the fact that different edge clouds  may have different objectives, which is why traditional single objective optimization framework cannot be used. 
The  contributions of this paper are:
\begin{enumerate}
	\item We propose a CGT based multi-objective resource sharing and allocation framework  for edge clouds in an edge computing setting. We show that  the resource sharing and allocation problem can be modeled as a \emph{Canonical} game. The \emph{core} of this canonical game is non-empty and the \emph{Shapley value} \cite{shapley1988value} lies inside the core. For small number of players, we show numerically that the Shapley value  provides a fair, Pareto-optimal and stable\footnote{No service provider has the incentive to leave the coalition and form a smaller coalition.} allocation.  
	\item We  propose an efficient $\mathcal{O}(N)$ algorithm that provides  an allocation from the core, hence reducing the complexity from $\mathcal{O}(2^N)$ (for Shapley value). 
	\item We evaluate the performance of our proposed framework and show that the resource sharing and allocation mechanism  improves the utilities of  game players. 
\end{enumerate}
The paper is  structured as: 
Section~\ref{sec:prelim} provides a primer on  CGT,  the Core and Shapley Value. 
Section~\ref{sec:sysmodel} presents our system model. It also presents the resource sharing and allocation optimization problem with the game theoretic solution.  
Section~\ref{sec:exp_results} presents our experimental results while Section~\ref{sec:conclusion} concludes the paper.


\section{Preliminaries}\label{sec:prelim}
In this section, we present a primer on  Cooperative Game Theory, the Core and Shapley Value as they are used later in the paper.
\subsection{Cooperative Game Theory}
Cooperative game theory assists us in understanding the behavior of rational players in a cooperative setting \cite{han2012game}. Players can have agreements among themselves that affect the strategies as well as their obtained payoffs or utilities.  
Below we provide some basic definitions related to cooperative game theory. 

\noindent\textit{\bf {Coalition Game\cite{han2012game}:}} Any coalition game can be represented by the pair $(\mathcal{N},v)$ where $\mathcal{N}$ is the set of players that play the game, while $v$ is the mapping function that determines the utilities or payoffs received by the players in $\mathcal{N}$. 


\noindent\textit{\bf {Transferable Utility (TU):}} If the total utility of any coalition can be divided in any manner among the game players, then the game has a \emph{transferable} utility. 

\noindent\textit{\bf {Characteristic function:}} The characteristic function for a coalitional game with TU is a mapping $v: 2^{\mathcal{N}} \mapsto \mathbb{R}$ with $v(\emptyset)=0$.

\noindent\textit{\bf {Superadditivity of TU games:}} Any game with TU is said to be superadditive if the formation of large coalitions is always desired. Mathematically, 
\begin{equation}\label{eq:superadditivity}
v(S_1 \cup S_2 )\geq v(S_1)+v(S_2) \; \forall S_1,  S_2 \subset \mathcal{N}, s.t.\; S_1\cap S_2 =\emptyset
\end{equation}
\noindent\textit{\bf {Canonical Game:}} A coalition game is canonical if it is in the characteristic form and is superadditive. 

\noindent\textit{\bf {Group Rational:}}	A payoff vector $\textbf{x}\in \mathbb{R}^\mathcal{N}$ for dividing $v(\mathcal{N})$ is group-rational if $\sum_{n \in \mathcal{N}}x_n=v(\mathcal{N})$.

\noindent\textit{\bf{ Individual Rational:}}	A payoff vector $\textbf{x}\in \mathbb{R}^\mathcal{N}$ is individually-rational if every player obtains a larger benefit than it would acting alone, i.e., $x_n \geq v(\{n\}), \forall n\in \mathcal{N}$.

\noindent\textit{\bf {Imputation:}}	A payoff vector that is both individual and group rational is known as an imputation. 


\noindent\textit{\bf {Core:}}	For any TU canonical game $(\mathcal{N},v)$, the core is the set of imputations in which no coalition $S\subset\mathcal{N}$ has any incentive to reject the proposed payoff allocation and deviate from the \emph{grand coalition}\footnote{Grand coalition means that all the players in the game form a coalition.} to form a coalition $S$ instead. 
\par 	Any payoff allocation from the core is Pareto-optimal as evident from the definition of the core. Furthermore, the grand coalition formed is stable. 
However, a core is not always guaranteed to exist. Even if a core exists, it might be too large so finding a suitable allocation from the core may not be easy. Furthermore, as seen from the definition, the allocation from the core may not always be fair to all the players.  
 \emph{Shapley value} can be used to address the aforementioned  shortcomings of the core. Details about Shapley value can be found in  \cite{han2012game}. 
 

\section{System Model}\label{sec:sysmodel}


Let $\mathcal{N}=\{1,2, \cdots, N\}$ be the set of all the edge clouds that act as players in our game. We assume that each player has a set of $\mathcal{K}=\{1,2,\cdots, K\}$ different types of resources such as communication, computation and storage resources.  The $n$-th edge cloud can report its available resources $C^{(n)} = \{ C_1^{(n)} .... C_K^{(n)}\}$ to a central entity, the coalition controller. Here $C_k^{(n)}$ is the amount of resources of type $k$ available to edge node $n$. 
Vector $C=\{\sum_{n \in \mathcal{N}}C_{1}^{(n)}, \sum_{n \in \mathcal{N}}C_{2}^{(n)}, \cdots,\sum_{n \in \mathcal{N}}C_{K}^{(n)}\}$ represents  all available resources at different edge clouds.  Each edge cloud $n$ has a set of  native applications $\mathcal{M}_n= \{1,2,\cdots, M_n\}$ that ask for resources. Furthermore, the set of all applications that ask for resources from the set of edge clouds (coalition of edge clouds) is given by $\mathcal{M}=\mathcal{M}_1\cup\mathcal{M}_2\cdots\cup \mathcal{M}_N , \;$ where $\mathcal{M}_i \cap \mathcal{M}_j=\emptyset, \; \forall i \neq j,$ i.e., each application asks only  one edge cloud for resources. 
The coalition controller receives a request (requirement) matrix $R^{(n)}$ from every player $n \in \mathcal{N}$ 
	\begin{equation}
\label{eq:R_Req}
R^{(n)}=\Biggl[\begin{smallmatrix}
\mathbf{r^{(\textit{n})}_1}\\ 
.\\ 
.\\ 
.\\
\mathbf{r^{(\textit{n})}_{{M}_n}}
\end{smallmatrix}\Biggr] = \Biggl[\begin{smallmatrix}
r^{(n)}_{11} & \cdots  &\cdots   & r^{(n)}_{1{K}} \\ 
. & . &.  &. \\ 
. & . &.  &. \\
. & . &.  &. \\
r^{(n)}_{M_{n}1}&\cdots  &\cdots   & r^{(n)}_{{M}_n{K}}
\end{smallmatrix}\Biggr]
\end{equation}
where the $i^{th}$ row corresponds to the $i^{th}$ application while  columns represent different resources, i.e., $r_{ij}$ is the amount of $j^{th}$ resource that application $i \in \mathcal{M}_n$ requests.
The coalition, based on $R$ and  $C$, has to make an allocation decision $\mathcal{X}$ that optimizes the utilities $u_n(\mathcal{X})$ of all the edge clouds  $n \in \mathcal{N}$.  The allocation decision $\mathcal{X}$ is  a vector $\mathcal{X}=\{X^{(1)},X^{(2)},\cdots,$\\$ X^{(N)}\}$ that indicates how much of each resource $k \in \mathcal{K}$ is allocated to application $i$ at edge cloud $n \in \mathcal{N}$. Mathematically, 
\begin{equation}
\label{eq:A_n}
X^{(n)}=\Biggl[\begin{smallmatrix}
\mathbf{x_1^{(\textit{n})}}\\ 
.\\ 
.\\ 
.\\
\mathbf{x_{{|\mathcal{M}|}}^{(\textit{n})}}
\end{smallmatrix}\Biggr] = \Biggl[\begin{smallmatrix}
x_{11}^{(n)} & \cdots  &\cdots   & x_{1{K}}^{(n)} \\ 
. & . &.  &. \\ 
. & . &.  &. \\
. & . &.  &. \\
x_{{|\mathcal{M}|}1}^{(n)}&\cdots  &\cdots   & x_{{|\mathcal{M}|}{K}}^{(n)}
\end{smallmatrix}\Biggr]
\end{equation}
where $x_{ik}^{(n)}$ is the amount of resource $k\in \mathcal{K}$ belonging to player $n$ that is allocated  to application $i$. 

\subsection{Optimization Problem}\label{sec:opt_problem}
In this section, we first present the resource allocation problem for a single edge cloud (no resource sharing with other edge clouds). Then we present the MOO problem for the coalition. For a single edge cloud $n \in \mathcal{N}$, the allocation decision matrix $X_{SO}^{(n)}$, where $SO$ stands for single objective optimization, is given by 
\begin{equation}
\label{eq:X_n}
X_{SO}^{(\textit{n})}=\Bigl[\begin{smallmatrix}
\mathbf{x_1^{(\textit{n})}} . & . & .& \mathbf{x_{{{M}_n}}^{(\textit{n})}}
\end{smallmatrix}\Bigr]^t
\end{equation}
The optimization problem is given below:
\begin{subequations}\label{eq:opt_single}
	\begin{align}
	\max_{{X}_{SO}^{(n)}}\quad& u_n({X}_{SO}^{(n)}) \label{eq:objsingle}\\
	\text{s.t.}\quad & \sum_{i \in \mathcal{M}_n} x_{ik}^{(n)}\leq C_{k}^{(n)}, \quad \forall k \in \mathcal{K},  
	 \label{eq:singlefirst} \displaybreak[0]\\
	& x_{ik}^{(n)} \leq r^{(n)}_{ik}, \quad \forall\; i\in \mathcal{M}_n, k \in \mathcal{K}, \label{eq:singlesecond} \displaybreak[1]\\
	&  x_{ik}^{(n)} \geq 0, \quad \forall\; i\in \mathcal{M}_n, k \in \mathcal{K}. \label{eq:singlethird} \displaybreak[2]
	\end{align}
\end{subequations}
The above optimization problem captures the goal of every single edge cloud i.e., maximizing its utility by allocating its available resources to its native applications\footnote{Applications that are primarily affiliated with the edge cloud.}. The first constraint \eqref{eq:singlefirst} indicates that  resources allocated to all users $i \in \mathcal{M}_n$ cannot exceed capacity. The second constraint \eqref{eq:singlesecond} indicates  allocated resources cannot exceed  required resources while the final constraint \eqref{eq:singlethird} requires the allocations to be non-negative.  For the cooperative case where edge clouds work in cooperation with other edge clouds, we aim to maximize the utility of our coalition in~\eqref{eq:opt_higher}:  
\begin{subequations}\label{eq:opt_higher}
	\begin{align}
\max_{\mathcal{X}}\quad&  \big(w_n u_n(\mathcal{X})+ \zeta_{n}\sum_{j \in \mathcal{N},j\neq n}  u^{n}_j(\mathcal{X})\big) \quad \forall n \in \mathcal{N} \label{eq:obj}\\
	\text{s.t.}\quad & \sum_{i \in \mathcal{M}} x_{ik}^{(n)}\leq C_{k}^{(n)}, \quad \forall k \in \mathcal{K}, \quad \forall n \in \mathcal{N}, 
	 \label{eq:obj1} \displaybreak[0]\\
	& \sum_{j \in \mathcal{N}}x_{ik}^{(j)} \leq r^{(n)}_{ik}, \quad \forall\; i\in \mathcal{M}, k \in \mathcal{K}, n \in \mathcal{N},  \label{eq:obj2}\displaybreak[1]\\
	&  x_{ik}^{(n)} \geq 0, \quad \forall\; i\in \mathcal{M}, k \in \mathcal{K}, n \in \mathcal{N}. \label{eq:obj3} \displaybreak[2]
	\end{align}
\end{subequations}
 Here $u_n(\mathcal{X})$ in ~\eqref{eq:opt_higher} indicates the utility that an edge cloud receives by providing its resources to  applications in  $\mathcal{M}_n$. Remaining available resources at edge cloud $n$ 	can be used by applications of other edge clouds $j \in \mathcal{N}\backslash n$ that will be charged at a rate that edge cloud (each edge cloud acts as a player in our cooperative game) $j$ would have charged the application $i \in \mathcal{M}_j$ had the request been satisfied by edge cloud $j$. Hence $u^{n}_j(\mathcal{X})$ is the utility that edge cloud $n$ receives for sharing its resources with edge cloud $j$. 
 $w_n$ is the weight assigned to the utility of player $n$. $\zeta_n$ is the weight assigned to the utility $u^{n}_j(\mathcal{X})$. The purpose of the weights is to highlight that each edge cloud first allocates resources to its own applications and then shares the remaining resources with other edge clouds. 


\subsection{Game Theoretic Solution}\label{sec:gametheory}
The characteristic function $v$ for our game that solves  problem in~\eqref{eq:opt_higher} is given in ~\eqref{eq:payoff_function}. We model the resource allocation and sharing problem (with multiple objectives) in the aforementioned settings  as a canonical cooperative game with transferable utility. We choose a \emph{monotone non-decreasing utility function} for our resource allocation and sharing framework. This is because in edge computing,  the more resources provided, the higher is the gain or utility for the edge cloud. It is highly unlikely that the utility of any edge cloud will decrease with an increase in the amount of resources allocated.   Since the utility function used is monotone non-decreasing, the game is convex. The core of any convex game is large and contains the Shapley value \cite{han2012game}. Our goal is to obtain an allocation from the core as all  allocations in the core guarantee Pareto optimality and stability of the grand coalition i.e. the allocation decision obtained is Pareto optimal and no player (edge cloud) will have the incentive to leave the grand coalition. 
We first use the Shapley value, that requires solving  $2^{N}-1$ optimization problems,   to obtain an allocation from the core and then propose a computationally efficient algorithm that can provide us an allocation from the core but does not provide the fairness of the Shapley value. 
\begin{align} \label{eq:payoff_function}
v(\mathcal{N})=
\sum_{n \in \mathcal{N}}\bigg(w_n u_n(\mathcal{X})+ \zeta_{n}\sum_{j \in \mathcal{N},j\neq n} u^{n}_j(\mathcal{X})\bigg)
\end{align}

Algorithm \ref{algo:alg1} provides an overview of our proposed approach. 
We calculate the Shapley value for the players and assign it to $\mathbf{u}(R,\mathcal{X})$. Finally to obtain $\mathcal{X}$, we take the inverse function of $\mathbf{u}$. As we are using monotonic utilities, we know that the inverses of the utilities exist.  A fundamental issue with the Shapley value is its complexity. 
This motivates developing a more efficient algorithm to obtain an allocation from the core.

\subsection{Reducing the Computational Complexity}
To reduce computational complexity, we propose an algorithm  (Algorithm \ref{algo:alg2}) that requires solving only $2N$ optimization problems rather than $2^{N}$.
\begin{algorithm}
	\begin{algorithmic}[]
		\State \textbf{Input}: $R, C,$ and vector of utility functions of all players $\mathbf{u}$ 
		\State \textbf{Output}: $\mathcal{X}$
		\State \textbf{Step $1$:}  $ \mathbf{u}(R,\mathcal{X}) \leftarrow$0,  $ \mathcal{X} \leftarrow$0, $ \boldsymbol{\phi} \leftarrow$0,  
		\State \textbf{Step $1$:}  Calculate Shapley Value $\phi_n\;$  $\forall n \in \mathcal{N}$
		\State \textbf{Step $2$:} $ \mathbf{u}(R,\mathcal{X}) \leftarrow \boldsymbol{\phi}$
		\State \textbf{Step $3$:} $\mathcal{X}\leftarrow \mathbf{u}^{-1}$
	\end{algorithmic}
	\caption{Shapley Value based Resource Allocation}
	\label{algo:alg1}
\end{algorithm}
\begin{algorithm}
	\begin{algorithmic}[]
		\State \textbf{Input}: $R, C,$ and vector of utility functions of all players $\mathbf{u}$ 
		\State \textbf{Output}: $\mathcal{X}$
		\State \textbf{Step $1$:}  $ \mathbf{u}(R,\mathcal{X}) \leftarrow$0,  $ \mathcal{X} \leftarrow$0,   $ \boldsymbol{O_1} \leftarrow$0, $ \boldsymbol{O_2} \leftarrow$0, 
		\State \textbf{Step $2$:}
		 \For{\texttt{$n \in \mathcal{N}$}}
		\State $ {O_1^n} \leftarrow$ \texttt{Solution of the optimization problem in Equation \eqref{eq:opt_single}}
		\EndFor
		\State \textbf{Step $3$:} Update $R, C$ based on Step 2
		\State \textbf{Step $4$:}
		\For{\texttt{$n \in \mathcal{N}$}}
		\State $ {O_2^n} \leftarrow$ \texttt{Solution of the optimization problem in Equation \eqref{eq:opt_single_j} with updated $R$ and $C$}
		\State Update $R$ and $C$
		\EndFor
		\State \textbf{Step $5$:} $u_n(R,\mathcal{X})\leftarrow O_1^n+O_2^n$ $\forall n \in \mathcal{N}$ 
		\State \textbf{Step $6$:} $\mathcal{X}\leftarrow \mathbf{u}^{-1}$
	\end{algorithmic}
	\caption{$\mathcal{O}$($N$) algorithm for obtaining Core's allocation}
	\label{algo:alg2}
\end{algorithm}
\begin{subequations}\label{eq:opt_single_j}
	\begin{align}
	\max_{\mathcal{X}}\quad& \sum_{j\neq n}u_j^n(\mathcal{X}) \quad  \forall n \in \mathcal{N} \label{eq:objsinglej}\\
	\text{s.t.}\quad & \sum_{i \in \mathcal{M}\backslash\mathcal{M}_n} x_{ik}^{(n)}\leq C_{k}^{(n)}, \quad \forall k \in \mathcal{K}, 
	 \label{eq:singlefirstj} \displaybreak[0]\\
	& x_{ik}^{(n)} \leq r^{(n)}_{ik}, \quad \forall\; i\in \mathcal{M}\backslash\mathcal{M}_n, k \in \mathcal{K}, \label{eq:singlesecondj} \displaybreak[1]\\
	&  x_{ik}^{(n)} \geq 0, \quad \forall\; i\in \mathcal{M}\backslash\mathcal{M}_n, k \in \mathcal{K}. \label{eq:singlethirdj} \displaybreak[2]
	\end{align}
\end{subequations}
\begin{theorem}
	The solution obtained from Algorithm \ref{algo:alg2} lies in the \emph{core}.
\end{theorem}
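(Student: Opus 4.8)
The plan is to show that the payoff vector produced by Algorithm~\ref{algo:alg2}, namely $x_n = O_1^n + O_2^n$ for each $n \in \mathcal{N}$, is an element of the core. Since the game is canonical (and in fact convex), core membership is equivalent to two conditions: group rationality, $\sum_{n \in \mathcal{N}} x_n = v(\mathcal{N})$, and coalition rationality, $\sum_{n \in S} x_n \ge v(S)$ for every $S \subsetneq \mathcal{N}$ (individual rationality being the singleton instance $S=\{n\}$). I would therefore structure the proof as the verification of these two inequalities for the specific decomposition $x_n = O_1^n + O_2^n$ computed by the algorithm, reading off $O_1^n$ from the single-cloud problem \eqref{eq:opt_single} and $O_2^n$ from the sharing problem \eqref{eq:opt_single_j}.

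First I would establish group rationality by arguing that the two-phase procedure attains the grand-coalition optimum in \eqref{eq:payoff_function}. The objective $v(\mathcal{N})$ splits into a \emph{native} part $\sum_n w_n u_n$ and a \emph{sharing} part $\sum_n \zeta_n \sum_{j\neq n} u^{n}_j$. Phase~1 (Step~2) solves, for each cloud independently, the allocation of its own resources to its own applications; because each cloud's native problem \eqref{eq:opt_single} uses only $C^{(n)}$ on $\mathcal{M}_n$, these subproblems are decoupled and $\sum_n O_1^n$ equals the optimal native part. Phase~2 (Step~4) then pours each cloud's residual capacity into foreign applications to maximize the sharing part, after the residual demands and capacities have been updated in Step~3. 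Using that the weights encode strict native-first priority (the modelling premise that $w_n$ dominates $\zeta_n$) together with monotonicity of the utilities, I would argue that no feasible $\mathcal{X}$ deviating from ``serve native demand first, then share'' can raise the scalarized objective, so $\sum_n (O_1^n + O_2^n) = v(\mathcal{N})$.

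Next I would verify coalition rationality for an arbitrary $S$. The two key structural observations are: (i) Phase~1 is insensitive to the surrounding coalition, since cloud $n$'s native problem involves only its own resources and applications, hence each $O_1^n$ with $n \in S$ is identical in the grand coalition and in the stand-alone game on $S$; and (ii) in Phase~2 the grand coalition offers every $n \in S$ a weakly larger pool of foreign applications to serve, $\mathcal{N}\setminus\{n\}$ rather than $S\setminus\{n\}$, with the same residual capacity, so by monotonicity cloud $n$ can earn at least its within-$S$ sharing revenue. Combining (i) and (ii) gives $\sum_{n \in S}(O_1^n + O_2^n) \ge v(S)$, where $v(S)$ is itself attained by running the same two-phase greedy restricted to $S$. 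Individual rationality is then the immediate special case: $v(\{n\}) = O_1^n$ (a lone cloud has no sharing partners) and $O_2^n \ge 0$, so $x_n \ge v(\{n\})$.

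The main obstacle is Phase~2's sequential update inside the loop of Algorithm~\ref{algo:alg2}: because clouds are processed in order and both the residual requirements $R$ and capacities $C$ are updated after each cloud acts, the earnings $O_2^n$ are order-dependent, and foreign applications belonging to members of $S$ may be partially consumed by clouds \emph{outside} $S$ before a member of $S$ is processed. I would need an exchange/domination argument showing that this cross-coalition contention and order-dependence cannot push $S$'s aggregate Phase-2 earnings below the $S$-restricted optimum, leveraging monotonicity and the fact that enlarging the served-application pool never shrinks the feasible set available to $S$. The companion delicate point is proving that the greedy two-phase actually attains the true maxima $v(S)$ and $v(\mathcal{N})$ rather than merely a feasible lower bound; this rests squarely on the native-over-sharing weight ordering and on the monotone, hence invertible, utilities used to recover $\mathcal{X}$ in Step~6.
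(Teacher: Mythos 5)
Your proposal follows the same skeleton as the paper's own proof: individual rationality comes from $O_1^n = v(\{n\})$ together with $O_2^n \ge 0$, group rationality from identifying $\sum_{n\in\mathcal{N}}(O_1^n+O_2^n)$ with $v(\mathcal{N})$ in \eqref{eq:payoff_function}, and stability of the grand coalition from monotonicity/superadditivity considerations. In that sense you have reconstructed the paper's argument, and your singleton case of coalition rationality is exactly the paper's individual-rationality step.

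However, as you yourself flag, what you have written is a plan rather than a proof: the two hard steps are left open. First, group rationality requires that the order-dependent, two-phase greedy of Algorithm \ref{algo:alg2} actually attains the maximum defining $v(\mathcal{N})$; for general monotone but non-concave utilities (e.g., the sigmoidal ones used in the paper) a sequential greedy need not attain the optimum of a joint maximization, so this needs a genuine argument resting on the native-first weight ordering, which you invoke but do not supply. Second, coalition rationality for arbitrary $S \subsetneq \mathcal{N}$ needs the exchange/domination lemma you describe, to rule out that Phase-2 contention from clouds outside $S$ depresses $\sum_{n\in S}(O_1^n+O_2^n)$ below $v(S)$; you explicitly defer this. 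It is worth knowing that the paper's own proof does not resolve these points either: it disposes of the first with the assertion that Steps 2, 4 and 5 ``obtain the sum of the utilities as well,'' and of the second with a one-line appeal to superadditivity and monotone non-decreasing utilities, with no treatment of the order-dependence you correctly identify. So you have not missed an idea that the paper possesses --- you have made explicit precisely the steps the paper asserts without argument; closing your two open lemmas is what would be required to make either version of the proof rigorous.
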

\begin{proof}
	We need to show that the utilities obtained in Step $5$ of Algorithm \ref{algo:alg2}: a) are individually rational. b) are group rational, and c) no player  has the incentive to leave the grand coalition and form another coalition $S \subset \mathcal{N}$. 
	
	\noindent\textit{Individual Rationality:} For each player $n \in \mathcal{N}$, the solution $O_1^n$ obtained by solving the optimization problem in \eqref{eq:opt_single} is the utility a player obtains by working alone i.e. it is $v\{n\}$. The value $O_2^n$ in Step 4 is either zero or positive but cannot be negative due to the nature of utility used i.e. 
	\begin{align}
	u_n(R,\mathcal{X})&=O_1^n+O_2^n \geq O_1^n, \quad \forall n \in \mathcal{N}. \nonumber
	\end{align}
\noindent	\textit{Group Rationality:} The value of the grand coalition $v\{\mathcal{N}\}$ as per  \eqref{eq:payoff_function} is the sum of utilities $u_n$'s and $u_j^n$'s. Steps 2, 4 and 5 of Algorithm \ref{algo:alg2} obtain the sum of the utilities as well. Hence the solution obtained as a result of Algorithm \ref{algo:alg2} is group rational.    Furthermore, due to super-additivity of the game and monotone non-decreasing nature of the utilities, no player has the incentive to form a smaller coalition. Hence Algorithm \ref{algo:alg2} provides a solution from the core.  
\end{proof}
\section{Experimental Results}\label{sec:exp_results}
To evaluate the performance of the proposed resource sharing and allocation mechanism, we consider a set of game players where each player has three different types of resources i.e., storage, communication and computation. Without loss of generality (W.l.o.g), the model can be extended to include other types of resources/parameters as well. We present results for four different settings with 1) $N=3,\; {M}_n=3, \; \forall n \in \mathcal{N}$; 2) $N=3,\; {M}_n=20, \; \forall n \in \mathcal{N}$;   3) $N=3,\; {M}_n=100, \; \forall n \in \mathcal{N}$; and 4) $N=10,\; {M}_n=20, \; \forall n \in \mathcal{N}$. 
We used linear and sigmoidal utilities (see ~\eqref{eq:ut1}) for all the players. However, the results  hold for any monotone non-decreasing utility.
\begin{align}\label{eq:ut1}
u_n(\mathcal{X})=\sum_{i \in \mathcal{M}_n}\bigg(\sum_{k=1}^{{K}}\frac{1}{1+e^{-\mu(x_{ik}^{(n)}-R^{(n)}_{ik})}}\bigg) \quad \forall n \in \mathcal{N}. 
\end{align}
$\mu$ is chosen to be either $0.01$  or $10$ to capture the requirements of different applications.  
The request matrix $R^{(n)}$ and capacity vector $C^{(n)}$ for each player $n \in \mathcal{N}$ were randomly generated. 
The experiments were run in \texttt{Matlab R2016b} on \texttt{Core-i7} processor with \texttt{16  GB RAM}. The optimization problems were solved using the \texttt{OPTI-toolbox}.

\subsection{Value of Coalition}
 Table \ref{tab:util1} shows the utilities of different players in a $3$ player-$3$ application setting. Player one was assigned a linear utility while player two and three had sigmoidal utilities. It is evident from the table  that the utility of the coalition improves when more players are added. The grand coalition has the highest utility, verifying the superadditive nature of the game. The last row shows the Shapley Values (S.V.) for the grand coalition. Our $\mathcal{O}(N)$ (alg2 in Figure \ref{fig:3playerm01}) provides the same value of coalition as Shapley value (due to Pareto optimality), however players are assigned different utilities in the coalition. 
 Figure \ref{fig:3playerm01} shows the value of coalition for $3$ players, and $3,10$ and $100$ applications with $\mu$ set to $0.01$ and $100$. The grand coalition achieves the highest coalition utility for all  three cases. As a higher value of $\mu$ (i.e., the slope of the sigmoidal function is steep) puts a stringent requirement on the edge clouds to satisfy requests of the applications if it is to gain any utility, we see that that the overall value of coalition is smaller for $\mu=10$ when compared with $\mu=0.01$. Figure \ref{fig:all} shows the utility of a player without resource sharing and with resource sharing in the grand coalition in varying experimental settings ($\mu=0.01$, $\mu=10$, $M_n=20$, and $M_n=100$). We show the utility of the player in the grand coalition both using Shapley value (SV) and our $\mathcal{O}(N)$ algorithm (alg2 in Figure). Similar trends are observed in Figure \ref{fig:all2}. However, we do not calculate the Shapley value for $N=10$ and $M_n=20$ due to the computational complexity and the utility of the players in the grand coalition is obtained using Algorithm $2$.   It is evident that all the players' utilities improve by sharing resources and taking part in the cooperative game.  

\begin{table}[]
	\centering
	\caption{Utility (Pay-off) for different coalitions with $\mu=0.01, N=3, M_n=3$}
	\begin{tabular}{|l|l|l|l|l|}
		\hline
		\multirow{2}{*}{\textbf{Coalition}} & \multicolumn{3}{c|}{\textbf{Player Utilities}}                     & \multirow{2}{*}{\textbf{Coalition Utility}} \\ \cline{2-4}
		& \textbf{$P_1$} & \textbf{$P_2$} & \textbf{$P_3$} &                                             \\ \hline
		\{1\}	&      36            &        0           &      0             &    36                                    \\ \hline
		\{2\}	&      0             &     4.37             &       0            &    4.37                                 \\ \hline
		\{3\}	&         0          &    0               &        4.31          &   4.31                   \\ \hline
		\{12\}	&        40.17          &     4.375              &    0               &         44.545     \\ \hline
		\{13\}	&       40.31          &      0             &     4.313              &  44.623                     \\ \hline
		\{23\}	&        0           &      8.68            &   8.68                &    17.37                \\ \hline
		\{123\} $\mathcal{O}(N)$	&   44.68                &   8.68               & 8.68                &  62.06      \\ \hline
		\{123\} (S.V.)	&   40.34              &    10.90               &   10.81            &  62.06      \\ \hline
	\end{tabular}
	\label{tab:util1}
\end{table}
	
 \begin{figure}
	\includegraphics[width=0.55\textwidth]{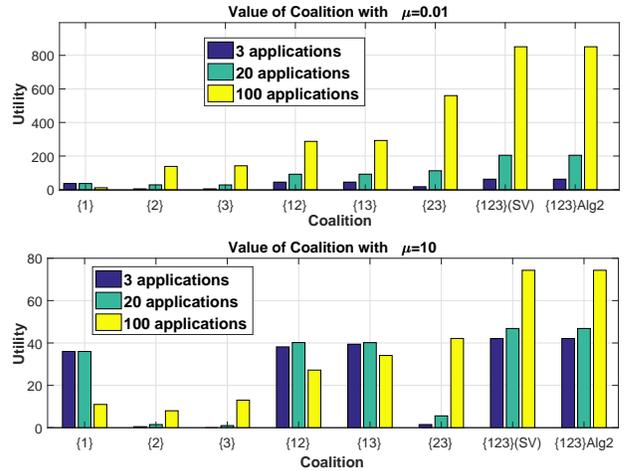}
	\caption{Value of Coalition for $3$ players, and $3,20$ and $100$ applications with $\mu=0.01$ and $\mu=10$}
	\protect\label{fig:3playerm01}
\end{figure}

\subsection{Time complexity} 
Computational complexity of the Shapley value is high, which is why  it cannot be used for a large number of players. We compared the performance of our $\mathcal{O}(N)$ algorithm with the Shapley value based allocation (given in Algorithm \ref{algo:alg1}) in a 3-player game with different number of applications. Experimental results showed that Algorithm \ref{algo:alg2} reduces the computation time by as large as 71.67\% and as small as 26.6\% while the average improvement was about 49.75\%. Figure \ref{fig:time} shows the calculation time for different user-application settings with varying $\mu$. We see that in all the settings, our proposed algorithm outperforms the Shapley value based allocation. 

 \begin{figure}
	\includegraphics[width=0.55\textwidth]{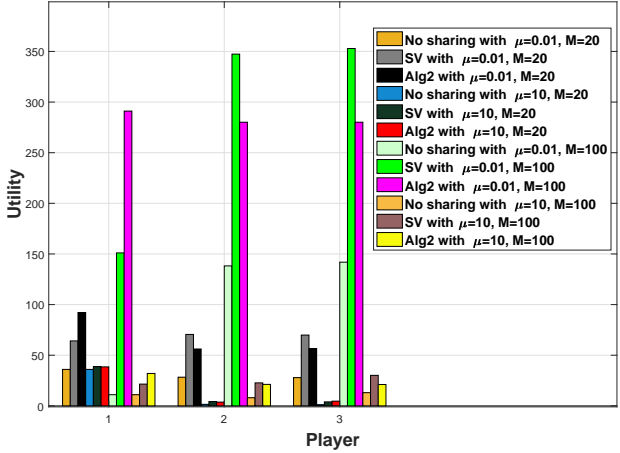}
	\caption{Utilities in different settings without and with resource sharing in grand coalition for $N=3$ }
	\protect\label{fig:all}
\end{figure}

\begin{figure}
	\includegraphics[width=0.55\textwidth]{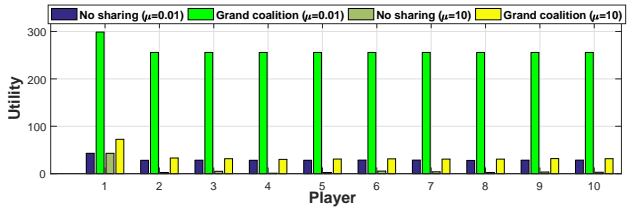}
	\caption{Player utilities  with and without resource sharing in grand coalition for $N=10$ and $M=20$ }
	\protect\label{fig:all2}
\end{figure}

 \begin{figure}
	\includegraphics[width=0.55\textwidth]{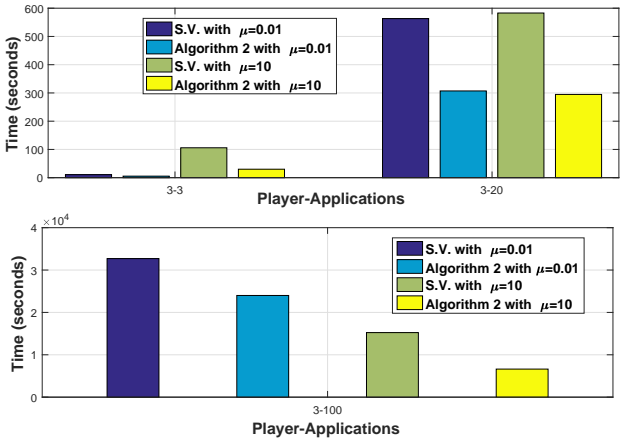}
	\caption{Comparison of time Complexity}
	\protect\label{fig:time}
\end{figure}
\section{Conclusions}\label{sec:conclusion}
We proposed a cooperative game theory based  resource allocation and sharing framework for edge computing that can efficiently allocate resources to different applications affiliated with edge clouds. Our resource allocation and sharing game is canonical and convex. The core for the game is non-empty, hence the grand coalition is stable and Shapley value also lies in the core. Furthermore, due to computational complexity of calculating Shapley value, we presented an $\mathcal{O}(N)$ algorithm that can provide an allocation and sharing decision from the core. Experimental results showed that edge clouds can improve their utility by using our proposed resource allocation mechanism and our  $\mathcal{O}(N)$ algorithm can provide us an allocation from the core (guarantee of Pareto optimality and stability) in a shorter time than the Shapley value.

\bibliographystyle{unsrt}
\bibliography{refs}

\end{document}